\newcommand{\keywords}[1]{\par\addvspace\baselineskip
\noindent\keywordname\enspace\ignorespaces#1}
\begin{document}

\mainmatter  

\title{Resistance of Statistical Attacks of Parastrophic Quasigroup Transformation}

\titlerunning{Resistance of Statistical Attacks of Parastrophic Quasigroup Transformation}

%
%
\author{Verica Bakeva \and Aleksandra Popovska-Mitrovikj \and Vesna Dimitrova}
\authorrunning{V. Bakeva A. Popovska-Mitrovikj  and  V. Dimitrova}


%

\institute{Faculty of Computer Science and Engineering,\\
Ss. Cyril and Methodius University, Skopje, Macedonia
\mailsa\\
\mailsb\\
\url{}}

%
%

\toctitle{Resistance of Statistical Attacks of Parastrophic
Quasigroup Transformation} \maketitle

\begin{abstract}
In this paper, we prove an important cryptographic property of $PE$-transformation proposed elsewhere.
If $PE$-transformation is used as encrypting function then after $n$ applications of it on arbitrary message
the distribution of $l$-tuples ($l=1,2,\dots, n$) is uniform. This property implies the resistance of
statistical kind of attack of this transformation. For illustration of theoretical results,
some experimental results are presented as well.

\keywords{cryptographic properties, statistical attack, encrypted message, quasigroup, parastrophic quasigroup transformation,
uniformity}


AMS Mathematics Subject Classification (2010): 94A60, 20N05, 60J20
\end{abstract}

\section{Introduction}

Quasigroups and quasigroup transformations are very useful for
construction of cryptographic primitives, error
detecting and error correcting codes. The reasons for that are  the structure of quasigroups, their large
number, the properties of quasigroup transformations and so on.
The quasigroup string transformations $E$ and their properties
were considered in several papers.

A quasigroup $(Q,*)$ is a groupoid (i.e. algebra with
one binary operation $*$ on the finite set $Q$) satisfying the
law:
\begin{equation}
(\forall u,v\in Q)(\exists !x,y\in Q)\ \ (x*u=v \ \&\ u*y=v)
\label{eden}
\end{equation}

In fact, (\ref{eden}) says that a groupoid $(Q,*)$ is a quasigroup
if and only if the equations $x*u=v$ and $u*y=v$ have unique
solutions $x$ and $y$ for each given $u,v \in Q$.

In the sequel, let $A=\{1,\dots, a\}$ be an alphabet of integers ($a\geq 2$) and denote by
$A^+=\{x_1\dots x_k |$ $\  x_i \in A, \ k\ge 1\}$ the set of all
finite strings over $A$. Note that $A^+=\displaystyle\bigcup_{k\ge
1} A^k$, where $A^k=\{x_1\dots x_k |\ x_i \in A\}$. Assuming that
$(A,*)$ is a given quasigroup, for any letter $l\in A$ (called
leader), Markovski and al. (see \cite{SDV-1}) defined the transformation  $E=E^{(1)}_l:A^+ \rightarrow A^+$
by
\begin{equation}
\label{funkcija_E} E(x_1\dots x_k)=y_1\dots y_k \Leftrightarrow
\left\{
\begin{array}{cll}y_1&=&l*x_1,\\
y_{i}&=&y_{i-1} * x_{i},\quad i=2,\ldots,k
\end{array}\right.
\end{equation}
where $x_i,y_i \in A$. Then, for given quasigroup operations $*_1,
*_2, \dots, *_n$ on the set $A$,  we can define mappings $E_1,
E_2, \dots, E_n$, in the same manner as previous by choosing fixed
elements $l_1, l_2, \dots, l_n \in A$ (such that $E_i$ is
corresponding to $*_i$ and $l_i$). Let
$$E^{(n)} = E^{(n)}_{l_n,\dots,l_1} = E_n\circ E_{n-1}\circ \dots\circ E_1,$$
where $\circ$ is the usual composition of mappings ($n\ge 1$). It is easy to
check that  the mappings $E$ is a bijection. In the same paper, authors proposed a transformation  $E^{(n)}$ as an encryption function and proved the following theorem.
\begin{theorem}
Let $\alpha\in A^+$  be an arbitrary string and
$\beta=E^{(n)}(\alpha)$. Then $m$-tuples in $\beta$ are uniformly
distributed for $m \le n$. \label{Teorema stara}
\end{theorem}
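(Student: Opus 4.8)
\medskip
\noindent\textbf{Proof plan.}
The plan is to pass to a probabilistic model in which the letters of the input $\alpha=x_1x_2\cdots x_k$ are produced by a stationary source (the simplest choice being i.i.d.\ draws from an \emph{arbitrary} distribution $\mu$ on $A$), and to read the phrase ``$m$-tuples are uniformly distributed'' as the statement that, in the stationary regime, every window $(\beta_i,\beta_{i+1},\dots,\beta_{i+m-1})$ of $m$ consecutive letters of $\beta=E^{(n)}(\alpha)$ is uniform on $A^m$ (equivalently, for long strings the empirical frequency of each $m$-tuple tends to $a^{-m}$, where $a=|A|$). The engine of everything is the Latin-square content of (\ref{eden}): for each operation $*_j$ and each fixed left operand $b$ the map $x\mapsto b*_j x$ is a bijection of $A$, with inverse written $c\mapsto b\backslash_j c$; moreover, for each fixed $c$ the map $b\mapsto b\backslash_j c$ is again a bijection of $A$. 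These two facts are the only algebraic inputs I will need.

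First I would introduce the ``column'' process $S_i=\bigl(\alpha^{(1)}_i,\alpha^{(2)}_i,\dots,\alpha^{(n)}_i\bigr)\in A^{n}$, where $\alpha^{(0)}=\alpha$, $\alpha^{(j)}=E_j(\alpha^{(j-1)})$, $\beta=\alpha^{(n)}$, and successive letters are linked by $\alpha^{(j)}_i=\alpha^{(j)}_{i-1}*_j\alpha^{(j-1)}_i$ as in (\ref{funkcija_E}). Since each new column is a \emph{deterministic} map $S_i=\Phi(S_{i-1},x_i)$ of the previous column and the single fresh letter $x_i$, the sequence $\{S_i\}$ is a Markov chain on $A^n$. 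The first real step is to show its transition matrix $T$ is \emph{doubly stochastic}. Fixing the source state $s$, the map $x\mapsto\Phi(s,x)$ is injective (already the level-$1$ coordinate $s_1*_1 x$ determines $x$), so each row sums to $\sum_x\mu(x)=1$; fixing the target state $s'$, for every $x$ there is exactly one predecessor $s$ with $\Phi(s,x)=s'$ (solve $s_1=s_1'/_1 x$ and then $s_2,\dots,s_n$ level by level via the division bijections), so each column also sums to $1$. Hence $\mathrm{Unif}(A^n)$ is stationary, and under mild irreducibility/aperiodicity (guaranteed when $\mu$ has full support and the quasigroups are chosen so the chain is connected) it is the unique limiting law, so in the stationary regime $S_i\sim\mathrm{Unif}(A^n)$. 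In particular every single top-level letter $\beta_i$ is uniform, settling $m=1$.

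Next I would read off a window $(\beta_i,\dots,\beta_{i+m-1})$ with $2\le m\le n$. Unrolling (\ref{funkcija_E}) at the top level, for each fixed $\beta_i=\alpha^{(n)}_i$ the block $(\alpha^{(n)}_{i+1},\dots,\alpha^{(n)}_{i+m-1})$ is a bijective image of $(\alpha^{(n-1)}_{i+1},\dots,\alpha^{(n-1)}_{i+m-1})$, and iterating down the levels re-expresses the whole window as a bijective re-encoding of the uniform column $S_i$ together with the fresh independent inputs $x_{i+1},\dots,x_{i+m-1}$. Computing $P(\beta_i=c_0,\dots,\beta_{i+m-1}=c_{m-1})$ then reduces to summing $a^{-n}\,\mu(x_{i+1})\cdots\mu(x_{i+m-1})$ over the admissible fibres, and the sum collapses by repeated use of the column-sum identity $\sum_{b}\mu(b\backslash_j w)=1$, one collapse per level. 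A direct computation confirms the mechanism: for $n=2$ one finds $P(\beta_i=p,\beta_{i+1}=q)=a^{-2}\sum_{u}\mu\bigl(u\backslash_1(p\backslash_2 q)\bigr)=a^{-2}$.

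The main obstacle, and the place where $m\le n$ is genuinely used, is the bookkeeping of the dependence between the ``past'' (summarised by the uniform column $S_i\in A^n$) and the ``future'' fresh letters $x_{i+1},\dots,x_{i+m-1}$: these are \emph{not} independent of $S_i$ through the intermediate levels, so the collapse cannot be argued by naive independence. Instead each of the $n$ levels supplies exactly one free summation variable, and one needs $n\ge m$ so that every division bijection appearing in the nested re-encoding is matched by such a variable. When $m=n+1$ the last summation is no longer over a full division bijection: it fails to collapse to $1$ and leaves a factor depending on $\mu$, so the window is not uniform. This is already visible for a single quasigroup ($n=1$), where $P(\beta_i=p,\beta_{i+1}=q)=a^{-1}\mu(p\backslash_1 q)$ is uniform only when the source is. Thus the proof splits into (i) the doubly-stochastic/stationarity argument for the column chain, and (ii) an induction on the level tracking the nested bijections and verifying the sharp collapse for every $m\le n$; step (ii), together with pinning down the irreducibility hypotheses in step (i), is where the real work lies.
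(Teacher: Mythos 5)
Your proposal is correct in substance, but it takes a genuinely different route from this paper --- indeed, the paper never proves Theorem \ref{Teorema stara} itself (it is imported from \cite{SDV-1}); the in-paper benchmark is the pair of proofs of Theorems \ref{uniformity of letters} and \ref{uniformity 2} for the $PE$-analogue. Those proofs avoid all Markov-chain limit theory by a single randomization device: the leader is drawn uniformly from $A$, which makes every letter \emph{exactly} uniform at every position (Theorem \ref{uniformity of letters}, an induction along the string using the inverse operations $f^{\prime}_{s}$), and a window of length $l\le n$ is then handled by induction on $n$: the event about $(Y^{(r+1)}_{t+1},\dots,Y^{(r+1)}_{t+l})$ is rewritten via the inverse operations as an event about $(Y^{(r+1)}_{t+1},Y^{(r)}_{t+2},\dots,Y^{(r)}_{t+l})$, the first coordinate is independent of the rest by causality, and the inductive hypothesis yields $\frac{1}{a}\cdot\frac{1}{a^{l-1}}$. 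Your column chain $S_i$ with doubly stochastic transition matrix is exactly the stationarity mechanism underlying that argument, and your collapse computation --- one $\mu$-summation per level, with the residual $\mu$-dependent factor at $m=n+1$ --- is a correct explicit version of the paper's inductive step; the residue you identify is precisely the class structure of Theorem 2 of \cite{VV}. What your route buys is an explanation of \emph{why} the theorem is a stationary-regime statement and exactly where $m\le n$ bites; what the paper's route buys is exactness at every position $t$ with no ergodicity hypotheses. Two repairs you should make: (i) your hedge about connectivity is unnecessary --- since the paper assumes $p_i>0$ for all $i$, the triangular steering of the level-$1$ coordinate (choose the level-$1$ values over $n$ consecutive steps; the higher levels follow by the division bijections) shows all entries of $T^n$ are positive, so irreducibility and aperiodicity are automatic; better still, drawing the leaders $l_1,\dots,l_n$ uniformly at random makes $S_1$ exactly uniform on $A^n$ (for each fixed $x_1$ the map $(l_1,\dots,l_n)\mapsto S_1$ is a bijection of $A^n$), and double stochasticity then propagates $S_i\sim U(A^n)$ for \emph{every} $i$, eliminating the limit regime entirely and recovering the paper's exact statement; (ii) your sentence that the fresh letters $x_{i+1},\dots,x_{i+m-1}$ are ``not independent of $S_i$'' is misstated --- they are independent of $S_i$; the genuine obstruction is only that the intermediate-level letters at positions $i+1,\dots,i+m-1$ depend on both $S_i$ and the fresh letters, which your fibre computation already handles correctly.
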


Also, in Theorem 2 in \cite{VV}, Bakeva and Dimitrova proved that the  probabilities of $(n+1)$-tuples in $\beta=E^{(n)}(\alpha)$ are divided in $a$ classes where $a=|A|$,  if $(p_1,p_2,\dots, p_a)$ is the distribution of letters in an input
string and  $p_1,p_2,\dots, p_a$ are distinct probabilities, i.e., $p_i\ne p_j$ for $i\ne j$. Each  class  contains  $a^n$ elements with  the same  probabilities  and  the  probability  of  each $(n+1)$-tuple in $i$-th class is
$\displaystyle \frac{1}{a^n}p_i$, for $i=1,2,\dots, a$. If
$p_{i_1}=p_{i_2}=\dots=p_{i_\nu}$ for some $1\le
i_1<\dots<i_\nu\le a$, then the classes
with probabilities $\displaystyle \frac{1}{a^n}p_{i_1}= \frac{1}{a^n}p_{i_2}=\dots=\displaystyle
\frac{1}{a^n}p_{i_\nu}$ will be merged in one class with $\nu a^n$
elements. Using these results, the authors proposed an algorithm for cryptanalysis.

In paper \cite{Krapez}, Krapez gave an idea for a new quasigroup string
transformation based on parastrophes of quasigroups. A modification of this quasigroup transformation is defined in \cite{VVA}.  In \cite{VVAA}, authors showed that the parastrophic quasigroup transformation has good properties for application in  cryptography. Namely, using that transformation the number of quasigroups of order 4 useful in cryptography is increased. To complete the proof of goodness of parastrophic quasigroup transformation for cryptography, it is needed to prove that Theorem \ref{Teorema stara} holds for that transformation, too. It will guarantee that message encrypted by the parastrophic quasigroup transformation will be resistant of a statistical kind of attacks.

\section{Parastrophic transformation}

In this Section, we briefly repeat the construction of parastrophic quasigroup transformation given in \cite{VVA}.

Recall that every quasigroup $(Q,*)$ has a set of five quasigroups,
called parastrophes denoted with $/, \backslash, \cdot,$ $
//, \backslash \backslash$  which are defined in Table 1.\\

\begin{center}
\begin{table}[h]
\caption{ Parastrophes of quasigroup operations $\ast$}
\begin{center}
$\begin{array}{lllll} \hline \multicolumn {5}{c}{{\rm
Parastrophe\ operations}}\\
\hline
x\backslash y=z&\Longleftrightarrow &x\ast z = y&& \\
x/y=z&\Longleftrightarrow &z\ast y=x&& \\
x\cdot y=z&\Longleftrightarrow & y\ast x = z&&\\
x//y=z&\Longleftrightarrow& y/x=z&\Longleftrightarrow &z \ast x=y\\
x\backslash \backslash y=z&\Longleftrightarrow& y\backslash
x=z&\Longleftrightarrow &y \ast z =x\\\hline
\end{array}$
\end{center}
\end{table}
\end{center}

In this paper we use the following notations for parastrophe
operations:
$$\begin{array}{lcl}
f_1(x,y)=x*y, & f_2(x,y)=x\backslash y, & f_3(x,y)=x/y,\\
f_4(x,y)=x\cdot y,&f_5(x,y)=x//y,& f_6(x,y)=x\backslash \backslash
y.
\end{array}$$

Let $M=x_1 x_2 \dots x_k$ be an input message. Let  $d_1$  be an random integer such that $(2\le d_1< k)$ and $l$ be random chosen element (leader) from $A$. Also, let $(A,*)$ be a quasigroup and $f_1$, \dots, $f_6$ be its parastrophe operations.

Using previous transformation $E$, for chosen $l$, $d_1$ and quasigroup $(A,*)$ we define a
parastrophic transformation $PE=PE_{l,d_1}:A^+ \rightarrow A^+$ as
follows.

At first, let  $q_1=d_1$ be the length of the first block, i.e., $M_1=x_1x_2\dots x_{q_1}$. Let $s_1=(d_1 \ \mbox{\rm mod } 6) +1$. Applying the transformation $E$ on the block $M_1$ with leader $l$ and quasigroup operation $f_{s_1}$, we obtain the encrypted block
$$C_1=y_1y_2\dots y_{q_1-1}y_{q_1}=E_{f_{s_1},l}(x_1x_2\dots x_{q_1-1}x_{q_1}).$$
Further   on,   using  last  two   symbols  in  $C_1$ we calculate
the   number   $d_2=4y_{q_1-1}+y_{q_1}$ which determines  the
length  of  the  next  block.  Let $q_2=q_1+d_2$,
$s_2=(d_2 \ \mbox{\rm mod }  6) +1$ \ \ and\ \ $M_2=x_{q_1+1}\dots
x_{q_2-1}x_{q_2}$. After applying $E_{f_{s_2},y_{q_1}}$, the
encrypted block $C_2$ is

$$C_2=y_{q_1+1}\dots y_{q_2-1}y_{q_2}=E_{f_{s_2},y_{q_1}}(x_{q_1+1}\dots x_{q_2-1}x_{q_2}).$$

In general case, for given $i$, let the encrypted blocks
$C_1$,\dots, $C_{i-1}$ be obtained and $d_i$ be calculated using
the last two symbols in $C_{i-1}$, i.e., $d_i=4y_{q_i-1}+y_{q_i}$. Let
$q_i=q_{i-1}+d_i$, $s_i=(d_i  \ \mbox{\rm mod }  6)+1$ and
$M_i=x_{q_{i-1}+1}\dots x_{q_i-1}x_{q_i}$. We apply the
transformation $E_{f_{s_i},y_{q_{i-1}}}$ on the block $M_i$ and
obtain the encrypted block
$$C_i=E_{f_{s_i},y_{q_{i-1}}}(x_{q_{i-1}+1}\dots x_{q_i}).$$
Now, the parastrophic transformation is defined as
\begin{equation}
PE_{l,d_1}(M)=PE_{l,d_1}(x_1 x_2 \dots x_n)=C_1||C_2||\dots||C_r,%
\label{PE_transformacija}
\end{equation}
where $||$ is a concatenation of blocks.
Note that the
length of the last block $M_r$ may be shorter than $d_r$ (depends
on the number of letters in the input message). The transformation
$PE$ is schematically presented in Figure 1.

\begin{figure}[h!]
\includegraphics[width=3.7in]{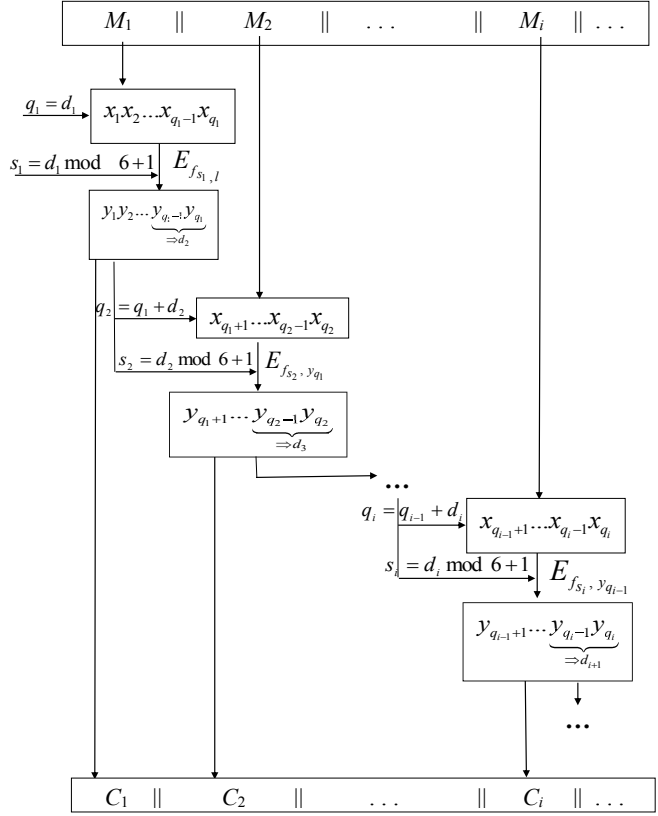}%
\vspace{-0.8 cm}

\caption{Parastrophic transformation $PE$}
\end{figure}

For arbitrary quasigroup on a set $A$, random leaders $l_1,\dots l_n$ and random lengths $d_1^{(1)},\dots, d_1^{(n)}$, we
define mappings $PE_1$, $PE_2$, \dots, $PE_n$  as in
(\ref{PE_transformacija}) such that $PE_i$ is corresponding to $d_1^{(i)}$ and $l_i$. Using them, we define the transformation $PE^{(n)}$ as follows:
$$PE^{(n)} = PE^{(n)}_{(l_n,d_1^{(n)}),\dots,(l_1,d_1^{(1)})} = PE_n\circ PE_{n-1}\circ \dots\circ PE_1,$$
where $\circ$ is the usual composition of mappings.

\section{Theoretical proof for resistance of statistical kind of attacks}

Let the alphabet $A$ be  as  above.  A  randomly
chosen element  of  the  set  $A^k$  can be considered as a
random vector $(X_1,X_2,\dots, X_k)$, where $A$ is the range of
$X_i$, $i=1,\dots,k$. We consider these vectors as input messages.
The transformation
$PE=PE_{l,d_1}:A^{+}\rightarrow A^{+}$ can be defined as:

{\small
\begin{equation}
\label{funkcija_PE}
\begin{array}{l}
PE_{l,d_1}(X_1,\dots, X_k)=(Y_1,\dots, Y_k)\Leftrightarrow\\ \\
\qquad\qquad\Leftrightarrow \left\{
\begin{array}{l}
Y_1=f_{s_{1}}(l,X_1), \ \ Y_{j}=f_{s_{1}}(Y_{j-1},X_{j}),\  j=2,\ldots,d_1,\\
\\
Y_{q_{i}+j}=f_{s_{i+1}}(Y_{q_{i}+j-1},X_{q_{i}+j}), i=1,\ldots,r-1,\ j=1,\ldots,d_{i+1}
\end{array}\right.
\end{array}
\end{equation}}

Let $(p_{1},p_{2},..., p_{a})$ be the probability distribution
of the letters $1,...,a$ in an input message. That implies
$p_{i}>0$  for each $i=1,2,...,a$ and $\displaystyle\sum_{i=1}^{a} p_i=1$.

An important property of one transformation for application in cryptography is the uniform distribution of the substrings in the output message $(Y_1,\dots, Y_k)$.
This property guarantee the resistance of statistical attack. Therefore, we investigate the distribution of substrings in the output message
obtained using $PE$-transformation. At first we will prove that after applying the transformation $PE^{(1)}$ on an input message $\alpha$,
the letters in transformed message are uniformly distributed.

\begin{theorem}
\label{uniformity of letters}
The letter $Y_t$ has uniform distribution on the set $A=\{1,\dots, a\}$, i.e., $Y_t\sim U(\{1,\dots, a\})$ for each $t$ ($t=1, 2,\dots, k$).
\end{theorem}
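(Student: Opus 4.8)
The plan is to argue by induction on the position $t$, with the single algebraic engine being the fact that each parastrophe $f_s$ ($s=1,\dots,6$) is itself a quasigroup operation. In particular, for every fixed $c\in A$ the map $x\mapsto f_s(c,x)$ is a bijection of $A$ (a row of the Latin square), and for every fixed $x\in A$ the map $c\mapsto f_s(c,x)$ is a bijection of $A$ (a column). The first fact says that applying $f_s(c,\cdot)$ to a letter with distribution $(p_1,\dots,p_a)$ merely permutes that distribution; the second is the one that produces uniformity: if the ``leader'' entering a step is uniform on $A$ and independent of the fresh input letter, then the column bijection carries the uniform law to a uniform law. Equivalently, for each fixed operation the one-step transition $c\mapsto f_s(c,X)$ has a doubly stochastic matrix, for which $U(A)$ is a fixed point; this is the same mechanism that underlies Theorem \ref{Teorema stara}. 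I would isolate this as a short lemma first.

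First I would treat the base case $t=1$. Here $Y_1=f_{s_1}(l,X_1)$, where by construction the leader $l$ is chosen uniformly at random from $A$ independently of the input and of $d_1$ (hence of $s_1$). Conditioning on $X_1=x$ and on the value of $s_1$, the column map $u\mapsto f_{s_1}(u,x)$ is a bijection, so $Y_1$ is uniform conditionally and therefore unconditionally. For the inductive step I would use that the recursion stays of the same form across block boundaries: by (\ref{funkcija_PE}) one has $Y_t=f_{s}(Y_{t-1},X_t)$ for every $t\ge 2$, where the leader of a new block is exactly the previous cipher letter $Y_{q_i}$ and $s$ is that block's operation index. Since $X_t$ is a fresh input letter it is independent of all earlier cipher symbols and of the entire block structure produced up to position $t$; conditioning on $X_t=x$ and invoking the column bijection together with the inductive uniformity of $Y_{t-1}$ would give uniformity of $Y_t$.

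The hard part is that the block lengths and operations are \emph{not} externally fixed: from (\ref{funkcija_PE}) the quantities $d_{i+1}=4Y_{q_i-1}+Y_{q_i}$ and $s_{i+1}=(d_{i+1}\bmod 6)+1$ are computed from previously produced cipher symbols, so the leader $Y_{q_i}$ of a block is statistically coupled to the operation $f_{s_{i+1}}$ applied in that block. This coupling obstructs the naive ``uniform leader, independent operation'' argument, since $Y_{q_i}$ need not be uniform conditionally on $s_{i+1}$. The way I would break the coupling is to fix the whole input string and $d_1$ and regard the output as a function of the random leader alone: I would show that $l\mapsto Y_t$ is a bijection of $A$ for every $t$ (for $t=1$ this is immediate from the column bijection of $f_{s_1}$, while for $t\ge 2$ it must be propagated through the $l$-dependent block boundaries), so that the uniform law of $l$ pushes forward to the uniform law of $Y_t$; averaging over the input distribution $(p_1,\dots,p_a)$ then yields $Y_t\sim U(A)$ unconditionally. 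Verifying this bijectivity in the presence of $l$-dependent, data-dependent block structure is the technical heart of the argument, and is exactly where the quasigroup (Latin-square) property of every parastrophe must be invoked at each step.
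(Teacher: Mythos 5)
Your base case and your within-block mechanism are sound and essentially equivalent to the paper's: the paper conditions on the previous cipher letter and averages over the fresh plaintext letter, computing $P\{Y_{r+1}=j\}=\frac{1}{a}\sum_{i=1}^a P\{X_{r+1}=f_{s_{r+1}}^{\prime}(i,j)\}=\frac{1}{a}$ via independence of $X_{r+1}$ from $Y_r$ and bijectivity of $i\mapsto f_{s_{r+1}}^{\prime}(i,j)$, while you condition on $X_t=x$ and average over the uniform $Y_{t-1}$ using the column bijection; for a fixed operation these are the same doubly stochastic computation. You are also right -- and in fact more careful than the paper, whose proof silently treats $s_{r+1}$ as deterministic -- that the delicate point is the coupling between the leader $Y_{q_i}$ and the operation index $s_{i+1}=((4Y_{q_i-1}+Y_{q_i})\bmod 6)+1$ at block starts.

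However, your proposed repair is false: for a fixed input string and fixed $d_1$, the map $l\mapsto Y_t$ is in general \emph{not} a bijection of $A$ once a block boundary is crossed. Concretely, take the paper's quasigroup (\ref{kvazigrupa 40}) with $d_1=2$, so $s_1=3$ and the first block is processed with $f_3=/$, and fix the input prefix $x_1x_2x_3=1\,1\,2$. For $l=1,2,3,4$ one gets $(Y_1,Y_2)=(1,1),(4,3),(2,4),(3,2)$, hence $d_2=4Y_1+Y_2=5,19,12,14$ and $s_2=6,2,1,3$ respectively; then $Y_3=f_{s_2}(Y_2,x_3)$ equals $f_6(1,2)=4$, $f_2(3,2)=4$, $f_1(4,2)=1$, $f_3(2,2)=1$. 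So as $l$ runs over $A$, $Y_3$ takes the values $4,4,1,1$: the map is not injective, and -- fatally for your plan -- conditionally on this input $Y_3$ is uniform on $\{1,4\}$ rather than on $A$. This shows the strategy cannot be patched by a cleverer bijectivity argument: uniformity of $Y_t$ genuinely fails conditionally on the plaintext, so the randomness of the input letters, which your conditioning discards, is indispensable. The paper's route keeps that randomness (averaging over the fresh letter $X_{r+1}$), although, as your own analysis of the coupling suggests, a fully rigorous version of its inductive step would still need to condition jointly on the past, e.g.\ on $(Y_r,s_{r+1})$, a point the published proof passes over.
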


\begin{proof}{In this proof we use the same notations as in construction of parastrophic quasigroup transformation given in the previous section.

At first, note that the leader $l$ can be consider as uniformly distributed random variables on the set $A$ since it is randomly chosen from the set $A$. Therefore, $l\sim U(\{1,\dots, a\})$, i.e.,
$$P\{l=i\}=\frac{1}{a}, \qquad \mbox{\rm for each } i\in A.$$
Also, leader $l$ is independent of each letter $X_i$ in the input message.

Let $t=1$. Using the equation (\ref{funkcija_PE}) and  total probability theorem, for distribution of $Y_1$, we obtain
$$\begin{array}{lcl}
P\{Y_1=j\}&=&P\{f_{s_1}(l,X_1)=j\}\\
&=&\displaystyle \sum_{i=1}^a P\{l=i\}P\{f_{s_1}(l,X_1)=j | l=i\}\\
&=&\displaystyle \sum_{i=1}^a \frac{1}{a}P\{f_{s_1}(l,X_1)=j | l=i\}\\
&=&\displaystyle \sum_{i=1}^a \frac{1}{a}P\{f_{s_1}(i,X_1)=j\}\\
&=&\displaystyle \frac{1}{a} \sum_{i=1}^a P\{X_1=f_{s_1}^{\prime}(i,j)\}
\end{array}$$
Here, $f_{s_{1}}^{\prime}$ is the inverse quasigroup transformation of $f_{s_{1}}$, i.e. if $f_{s_{1}}(u,x)=v$, then $f_{s_{1}}^{\prime}(u,v)=x$. Note that if $i$ runs over all values of $A$ then for fixed $j$, the expression
$X_1=f_{s_1}^{\prime}(i,j)$ runs over all values of A, too. Therefore,
$$P\{Y_1=j\}=\frac{1}{a} \sum_{i=1}^a P\{X_1=f_{s_1}^{\prime}(i,j)\}=\frac{1}{a} \sum_{i=1}^a p_i =\frac{1}{a},$$
i.e., $Y_1\sim U(\{1,\dots, a\})$.

We proceed by induction, and let suppose that $Y_r\sim U(\{1, 2,\dots,a\})$.
Similarly as previous, using that $f_{s_{r+1}}$ is the parastrophe operation applied in $(r+1)^{th}$ step we compute the distribution of $Y_{r+1}$ as follows.%
$$\begin{array}{lcl}
P\{Y_{r+1}=j\}&=&P\{f_{s_{r+1}}(Y_r, X_{r+1})=j\}\\
&=&\displaystyle \sum_{i=1}^a P\{Y_r=i\}P\{f_{s_{r+1}}(Y_r, X_{r+1})=j | Y_r=i\}\\
&=&\displaystyle \sum_{i=1}^a \frac{1}{a}P\{f_{s_{r+1}}(i, X_{r+1})=j | Y_r=i\}
\end{array}$$
According to definition of parastrophic operation given with (\ref{funkcija_PE}), we can conclude that the random variables $X_{r+1}$ and $Y_r$ are independent. Applying that in previous equation, we obtain
$$\begin{array}{lcl}
P\{Y_{r+1}=j\}&=&\displaystyle\sum_{i=1}^a \frac{1}{a}P\{f_{s_{r+1}}(i, X_{r+1})=j\}\\
&=&\displaystyle \frac{1}{a} \sum_{i=1}^a P\{X_{r+1}=f_{s_{r+1}}^{\prime}(i,j)\}\\
&=&\displaystyle \frac{1}{a}.
\end{array}$$
As previous, $f_{s_{r+1}}^{\prime}$ is the inverse quasigroup transformation of $f_{s_{r+1}}$.
In the last equation, we use that $X_{r+1}=f_{s_{r+1}}^{\prime}(i,j)$ runs over all values of A when $j$ is fixed and $i$ runs over all values of $A$, i.e.
$$\sum_{i=1}^a P\{X_{r+1}=f_{s_{r+1}}^{\prime}(i,j)\}=\sum_{i=1}^a p_i =1.$$
On this way, we proved that $Y_t$ has uniform distribution on the set $A$, for each $t\ge 1$.
}\end{proof}

From the Theorem 2 we can conclude the follows. If  $M \in A^k$ and $C=PE_{l,d_1}(M)$ then the
letters in the message $C$ are uniformly distributed, i.e., the probability of the appearance of a letter $i$ at the arbitrary place of the string $C$ is  $\displaystyle \frac{1}{a}$, for each $ i \in A$.

\begin{theorem}
Let $M \in A^+$  be an arbitrary string and
$C=PE^{(n)}(M)$. Then the $m$-tuples in $C$ are uniformly
distributed for $m \le n$.
\label{uniformity 2}
\end{theorem}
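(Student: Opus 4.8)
The plan is to argue by induction on $n$, exploiting the decomposition $PE^{(n)} = PE_n \circ PE^{(n-1)}$. The base case $n=1$ concerns only $1$-tuples (single letters), which is exactly Theorem \ref{uniformity of letters}. For the inductive step I would isolate a single-step lemma: \emph{if an input string has uniformly distributed $(m-1)$-tuples, then after one application of a parastrophic transformation $PE$ (with an independently chosen uniform leader) the output has uniformly distributed $m$-tuples, and its uniform lower-order tuple distributions are preserved.} Writing $B = PE^{(n-1)}(M)$ with letters $Z_i$ and $C = PE_n(B)$ with letters $Y_i$, the induction hypothesis supplies uniform $j$-tuples in $B$ for all $j \le n-1$, and the lemma with $m=n$ upgrades this to uniform $n$-tuples in $C$, while the cases $m \le n-1$ follow from the preservation clause; together these give the assertion for all $m \le n$.

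The heart of the single-step lemma is the continuous recurrence $Y_t = f_{s(t)}(Y_{t-1}, Z_t)$ with $Y_0=l$, which holds for every $t$ because the leader of each new block is precisely the last output letter $y_{q_i}$ of the previous block. Consequently, for fixed operations and a fixed incoming letter $Y_{t-1}=i$, the map $(Z_t,\dots,Z_{t+m-1}) \mapsto (Y_t,\dots,Y_{t+m-1})$ is a bijection of $A^m$, with inverse $Z_t = f_{s(t)}'(i,c_1)$ and $Z_{t+j} = f_{s(t+j)}'(c_j,c_{j+1})$ for $j\ge 1$, where $f_s'$ is the inverse operation used in the proof of Theorem \ref{uniformity of letters}. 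The decisive step is then a leader-randomisation computation: summing over the uniform leader $l$, and using that $l \mapsto Y_{t-1}$ is a bijection for each fixed input prefix, one obtains
$$P\{(Y_t,\dots,Y_{t+m-1})=(c_1,\dots,c_m)\} = \frac{1}{a}\,P\{(Z_{t+1},\dots,Z_{t+m-1}) = (f_{s(t+1)}'(c_1,c_2),\dots,f_{s(t+m-1)}'(c_{m-1},c_m))\}.$$
The factor $1/a$ records that the independent uniform leader makes the \emph{first} window input letter $Z_t$ effectively uniform and decoupled, while the surviving probability is that of an input $(m-1)$-tuple, equal to $1/a^{m-1}$ by the induction hypothesis; the product is $1/a^m$, as required.

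The main obstacle, and the feature that genuinely distinguishes this from the classical argument for $E^{(n)}$ in Theorem \ref{Teorema stara}, is that the parastrophic transformation chooses its block lengths and operations from the \emph{ciphertext} itself, via $d_{i+1} = 4y_{q_i-1}+y_{q_i}$ and $s_{i+1} = (d_{i+1}\ \mbox{\rm mod}\ 6)+1$. Thus the operation sequence $s(\cdot)$ and the block boundaries are random and correlated with the very letters whose distribution we are computing, so the propagation map $l\mapsto Y_{t-1}$ need not be a single fixed bijection. I expect the cleanest way around this is to run the preceding computation conditionally on the realised block structure: for each fixed pattern of boundaries and operations every $f_{s(t)}$ is still a quasigroup operation, so all the inversions and the bijectivity of the window map remain valid verbatim, and the conditional probability is again $1/a^m$; averaging over block structures then yields the unconditional result. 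The delicate point to check carefully is a window that straddles a block boundary, where the operation switches and the new block inherits the previous output letter as its leader -- here one must verify that the recurrence and its invertibility carry across the switch, which they do precisely because that inherited leader is $Y_{q_i}$ and enters the recurrence in the same way as any other $Y_{t-1}$.
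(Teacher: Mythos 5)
Your proposal is correct at the paper's own level of rigor and takes essentially the same route as the paper's proof: induction on $n$ through $PE^{(n)}=PE_n\circ PE^{(n-1)}$, pointwise inversion of the one-step recurrence $Y_{t+j}=f_{s(t+j)}(Y_{t+j-1},Z_{t+j})$ to rewrite the output $m$-window event as the event that the first output letter equals $c_1$ intersected with an input $(m-1)$-window event, and then the factorization $\frac{1}{a}\cdot\frac{1}{a^{m-1}}=\frac{1}{a^m}$, where the paper obtains the factor $\frac{1}{a}$ from Theorem \ref{uniformity of letters} together with an asserted independence of $Y_{t+1}^{(r+1)}$ from the later input letters, while you rederive it by leader randomisation. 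Your third paragraph actually goes beyond the paper, which silently treats the operation sequence $s(\cdot)$ as deterministic and never mentions the ciphertext-dependence of the block lengths $d_{i+1}=4y_{q_i-1}+y_{q_i}$; be aware, though, that your proposed repair (condition on the realised block structure, then average) is not automatic either, since conditioning on the boundaries tilts the distributions of the leader and of the input window away from the unconditional ones supplied by the induction hypothesis.
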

\begin{proof} Let $(Y_{1}^{(n)},Y_{2}^{(n)},\dots, Y_{k}^{(n)})=PE^{(n)}(X_1, X_2,\dots, X_k)$.
We will prove this theorem by induction. Let suppose that the
statement is satisfied for $n=r$, i.e.,
$(Y_{t+1}^{(r)},Y_{t+2}^{(r)}\dots Y_{t+l}^{(r)})\sim
U(\{1,2,\dots, a\}^l)$ for each $1\leq l\leq r$ and each $t\geq
0$. Now, let $n=r+1$. We consider the distribution of
$(Y_{t+1}^{(r+1)},Y_{t+2}^{(r+1)}\dots Y_{t+l}^{(r+1)})$ for each
$1\leq l\leq r+1$ and arbitrary $t$.

$$\begin{array}{l}
P\{Y_{t+1}^{(r+1)}=y_{t+1}^{(r+1)},Y_{t+2}^{(r+1)}=y_{t+2}^{(r+1)},\dots,Y_{t+l}^{(r+1)}=y_{t+l}^{(r+1)}\}\\
\quad =P\{Y_{t+1}^{(r+1)}=y_{t+1}^{(r+1)},f_{s_{t+2}}(Y_{t+1}^{(r+1)},Y_{t+2}^{(r)})=y_{t+2}^{(r+1)},\dots\\
\qquad\qquad\qquad\qquad\qquad\qquad\qquad\qquad\qquad  \dots, f_{s_{t+l}}(Y_{t+l-1}^{(r+1)},Y_{t+l}^{(r)})=y_{t+l}^{(r+1)}\},
\end{array}$$
where $f_{s_j}$ is the parastrophe operation applied in the step $j$ and $f_{s_j}^{\prime}$ is its inverse transformation, $j=t+2,\dots, t+l$. Now,
$$\begin{array}{l}
P\{Y_{t+1}^{(r+1)}=y_{t+1}^{(r+1)},Y_{t+2}^{(r+1)}=y_{t+2}^{(r+1)},\dots,Y_{t+l}^{(r+1)}=y_{t+l}^{(r+1)}\}\\
\quad =P\{Y_{t+1}^{(r+1)}=y_{t+1}^{(r+1)},f_{s_{t+2}}(y_{t+1}^{(r+1)},Y_{t+2}^{(r)})=y_{t+2}^{(r+1)},\dots\\
\qquad\qquad\qquad\qquad\qquad\qquad\qquad\qquad\qquad  \dots, f_{s_{t+l}}(y_{t+l-1}^{(r+1)},Y_{t+l}^{(r)})=y_{t+l}^{(r+1)}\}\\
\quad =P\{Y_{t+1}^{(r+1)}=y_{t+1}^{(r+1)}, Y_{t+2}^{(r)}=f_{s_{t+2}}^{\prime}(y_{t+1}^{(r+1)},y_{t+2}^{(r+1)}),\dots\\
\qquad\qquad\qquad\qquad\qquad\qquad\qquad\qquad\qquad  \dots, Y_{t+l}^{(r)}=f_{s_{t+l}}^{\prime}(y_{t+l-1}^{(r+1)},y_{t+l}^{(r+1)})\}\\
\quad =P\{Y_{t+1}^{(r+1)}=y_{t+1}^{(r+1)}\}P\{Y_{t+2}^{(r)}=f_{s_{t+2}}^{\prime}(y_{t+1}^{(r+1)},y_{t+2}^{(r+1)}),\dots\\
\qquad\qquad\qquad\qquad\qquad\qquad\qquad\qquad\qquad  \dots, Y_{t+l}^{(r)}=f_{s_{t+l}}^{\prime}(y_{t+l-1}^{(r+1)},y_{t+l}^{(r+1)})\}.
\end{array}$$
The last equality is obtained by using the fact that $Y_{t+1}^{(r+1)}$ is independent of the
vector $(Y_{t+2}^{(r)},\dots,Y_{t+l}^{(r)})$, since $Y_{t+2}^{(r)},\dots,Y_{t+l}^{(r)}$ are not used for obtaining $Y_{t+1}^{(r+1)}$.

Using the
inductive hypothesis
$(Y_{t+2}^{(r)},\ldots,Y_{t+l}^{(r)})\sim U(\{1,2,\dots,
a\}^{l-1})$, $Y_{t+1}^{(r+1)}\sim U(\{1,2,\dots, a\})$ and from
previous expression we obtain that

$$P\{Y_{t+1}^{(r+1)}=y_{t+1}^{(r+1)},Y_{t+2}^{(r+1)}=y_{t+2}^{(r+1)},\dots,Y_{t+l}^{(r+1)}=y_{t+l}^{(r+1)}\}=\frac{1}{a}\cdot
\frac{1}{a^{l-1}}=\frac{1}{a^l}.$$

So, we have proved that $(Y_{t+1}^{(n)},Y_{t+2}^{(n)}\dots
Y_{t+l}^{(n)})\sim U(\{1,2,\dots, a\}^l)$ for each $l\leq n$ and
each $t\geq 0$.
\end{proof}


\section{Experimental results}

We made many experiments in order to present our theoretical results. Here we give an example. We have randomly chosen a message $M$ with
1,000,000 letters of the alphabet $A=\{1,2,3,4\}$ with the distribution of letters given in the Table \ref{input dist.} .

\begin{center}
\begin{table}[h]
\caption{The distribution of the letters in the input message}
\begin{center}
$\begin{tabular}{c c c c} \hline
1&2&3&4\\
\hline
\ 0.70\ \ &\ 0.15\ \ &\ 0.10\ \  &\ 0.05\ \ \\
\hline
\end{tabular}$
\end{center}
\label{input dist.}
\end{table}
\end{center}

We used the quasigroup  (\ref{kvazigrupa 40}) and its parastrophes.
\begin{equation}
\begin{tabular}{c|cccc}
\ \ $*$\ \  & \ \ $1$ \ \ &\ \  $2$ \ \ &\ \  $3$ \ \ &\ \ $4$\ \ \\
\hline
$1$ &  $1$ & $2$ & $4$ &\  $3$ \\
 $2$ & $3$ & $4$ & $2$ &\  $1$\\
 $3$ & $4$ & $3$ & $1$ &\  $2$\\
 $4$ & $2$ & $1$ & $3$ &\  $4$
\end{tabular}
\label{kvazigrupa 40}
\end{equation}

After applying $PE^{(3)}$ on $M$, we got a encrypted message $C=PE^{(3)}(M)$. In each $PE$-transformation, we chose the length of the first block $d_1=3$ and the initial leader $l_1=4$.

The distribution of letters in the output $C$ is given in the Table \ref{output dist.}.
\begin{center}
\begin{table}[h]
\caption{The distribution of the letters in the output message}
\begin{center}
$\begin{tabular}{c c c c } \hline
1&2&3&4\\
\hline
\ 0.2501\ \ &\ 0.2393\ \ &\ 0.2576\ \  &\ 0.2530\ \ \\
\hline
\end{tabular}$
\end{center}
\label{output dist.} %
\end{table}
\end{center}

It is obvious that the distribution of letters in the output message $C$ is uniform.

The distribution of pairs, triplets and 4-tuples of letters in $C$ are given on the Figure \ref{figure 1}, Figure \ref{figure 2} and Figure \ref{figure 3}. On the Figure \ref{figure 1}, the pairs are presented on the $x$-axis in the lexicographic order ($'11' \rightarrow 1$, $'12' \rightarrow 2$, \dots, $'44'\rightarrow 16$).  On the similar way, the triplets and 4-tuples are presented on Figure \ref{figure 2} and Figure \ref{figure 3}.

\begin{figure}[h!]
\begin{center}
\frame{\includegraphics[width=2.5in]{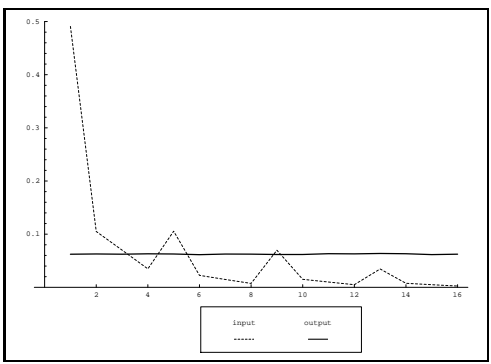}}
\caption{The distribution of the pairs in the input
message and the output message}
\label{figure 1}
\end{center}
\end{figure}

\begin{figure}[h!]
\begin{center}
\frame{\includegraphics[width=2.5in]{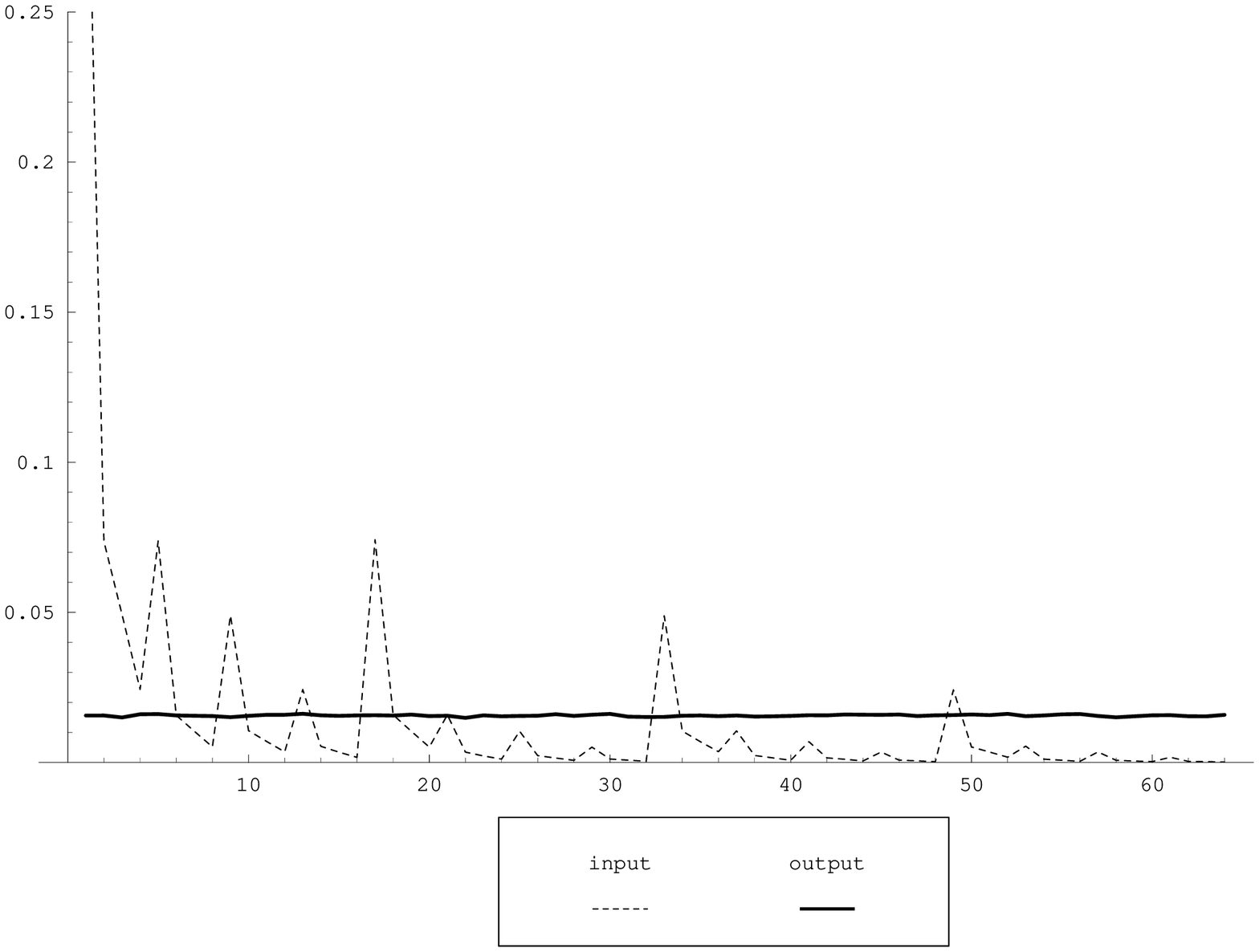}}
\caption{The distribution of the triplets in the input
message and the output message}
\label{figure 2}
\end{center}
\end{figure}

\begin{figure}[h!]
\begin{center}
\frame{\includegraphics[width=2.5in]{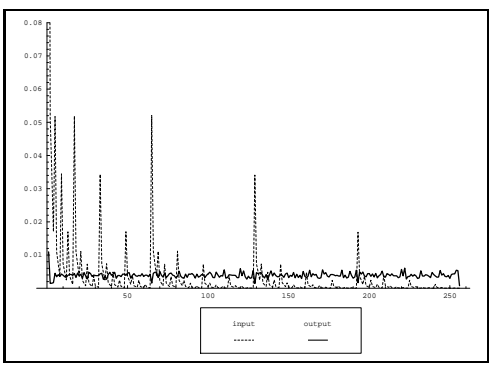}}
\caption{The distribution of the 4-tuples in the input
message and the output message}
\label{figure 3}
\end{center}
\end{figure}

We can see on Figure \ref{figure 1} and Figure \ref{figure 2} that after three applications of $PE$-transformations, the pairs and triplets are also uniformly
distributed as we proved in Theorem \ref{uniformity 2}.  Also, we can see on  Figure \ref{figure 3} that the distribution
of the 4-tuples in $C$ is not uniform, but that distribution is closer to the uniform distribution
than the distribution of 4-tuples  in the input message.

Next, we check whether Theorem 2 in \cite{VV} is satisfied when $PE$-transformation is applied. The distribution of pairs after one application of $PE$-transformation is presented on Figure \ref{figure 4}  a). On Figure \ref{figure 4}  b), we present the distribution of pairs after one application of $E$-transformation.  We can see that probabilities of pairs are divided in 4 classes on Figure \ref{figure 4} b) as the Theorem 2 in \cite{VV} claims. But we cannot distinguish any classes for probabilities on Figure  \ref{figure 4} a). This means that the algorithm for cryptanalysis proposed in \cite{VV} cannot be applied when an input message is encrypted by $PE$-transformation. Therefore encryption by $PE$-transformation is more resistant on statistical kind of attacks.

\begin{figure}[h!]
\begin{center}
\frame{\includegraphics[width=2.3in]{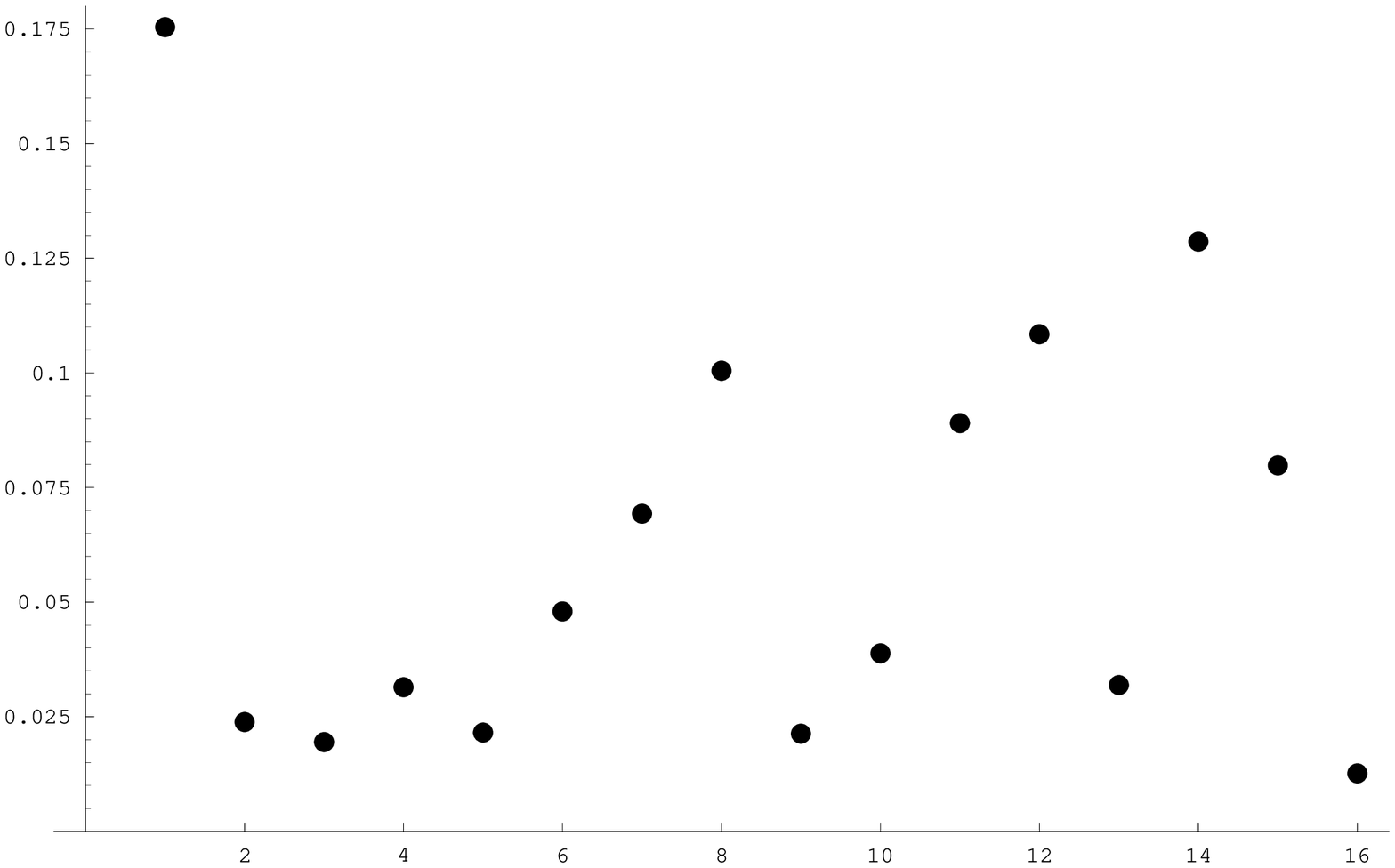}}
\quad
\frame{\includegraphics[width=2.3in]{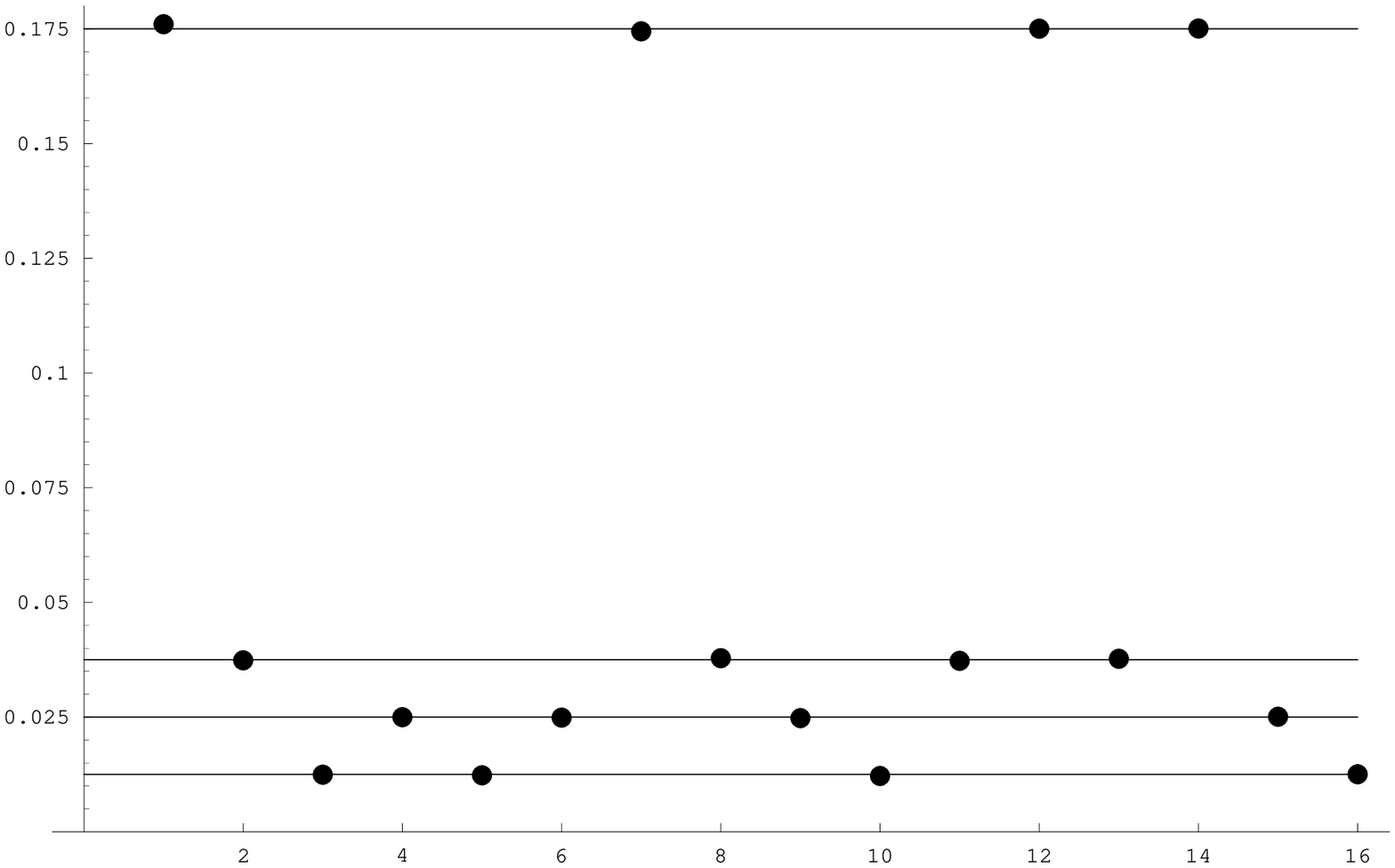}}\\
a)\quad\quad\quad\quad\quad\quad\quad\quad\quad\quad\quad\quad\quad\quad\quad\quad \quad\quad\quad b)
\caption{The distributions of the pairs in output messages obtained by $PE$- and $E$-transformation}
\label{figure 4}
\end{center}
\end{figure}

Note that for relevant statistical analyses, we must have enough large input message. Namely, in experiments, the probabilities of $n$-tuples are computed as relative frequencies. So, a relative frequency of an event tends to probability only if we have enough large sample.  The relevant statistical analyses cannot be done for shorter message. Therefore, statistical kind of attack is impossible on not enough large input message. Note that if an intruder catches and concatenates a lot of short messages
encrypted by the same $PE^{(n)}$-transformation, it will obtain a long message and it can apply a statistical attack. But, the attack will be impossible if we  change quasigroups used in encryption $PE^{(n)}$-transformation more often.


\section{Conclusion}

In this paper we proved that after $n$ applications of $PE$-transformation on an arbitrary message the distribution of $l$-tuples ($l=1,\dots, n$) is uniform and we cannot distinguish classes of probabilities in the distribution of $(n+1)$-tuples. This means that if $PE$-transformation is used as encryption function the obtained cipher messages are resistant on statistical kind of attacks when the number $n$ of applications of $PE$-transformation is enough large.

In \cite{SDV-1}, the authors concluded that $E$-transformation can be applied in cryptography as encryption function since  the number of quasigroups is huge one (there are more than $10^{58000}$ quasigroups when $|A|=256$) and the brute force attack is not reasonable.

If $PE$-transformation is used in encryption algorithm then the secret key will be a triplet  $(*, l, d_1)$. In that case, the brute force attack also is not possible since except the quasigroup operation $*$ and leader $l$, the key contains the length of the first block $d_1$ which has influence of the dynamic of changing of parastrophes.

At the end, in \cite{VVAA} authors proved that $PE$-transformation has better cryptographic properties than $E$-transformation for quasigroups of order 4.  Namely, some of fractal quasigroups of order 4 become parastrophic non-fractal and they can be used for designing of cryptographic primitives. Investigation for quasigroups of larger order cannot be done in real time since their number is very large.

Finally, from all results we can conclude that $PE$-transformation is better as encrypting function than $E$-transformation.


\end{document}